\newcommand{\classletter}{\ensuremath{\mathcal{G}}}
\newcommand{\ourclass}{\ensuremath{\classletter (k,\ell)}}
\newcommand{\treasure}{\ensuremath{g_{\ell+1}}}
\newcommand{\remove}[1]{}
\begin{document}

\title{{\bf Fast Rendezvous with Advice}}
\date{}

\author{
Avery Miller,
Andrzej Pelc\thanks{Partially supported by NSERC discovery grant and by the Research Chair in Distributed Computing at the Universit\'e du Qu\'{e}bec en Outaouais.}\\
E-mails: \email{avery@averymiller.ca}, \email{pelc@uqo.ca}\\
}
\institute{Universit\'{e} du Qu\'{e}bec en Outaouais, Gatineau, Canada.}

\date{ }
\maketitle

\begin{abstract}

Two mobile agents (robots), starting from different nodes of an $n$-node network at possibly different times, have to meet at the same node.
This problem is known as {\em rendezvous}.
Agents move in synchronous rounds using a deterministic algorithm.
In each round, an agent decides to either remain
idle or to move to one of the adjacent nodes. 
Each agent has a distinct integer label from the set $\{1,\dots,L\}$, which it can
use in the execution of the algorithm, but it does not know the label of the other agent. 

The main efficiency measure of a rendezvous algorithm's performance is its {\em time} , i.e., the number of rounds from the start of the later agent until the meeting.
If $D$ is the distance between the initial positions of the agents, then $\Omega(D)$ is an obvious lower bound on the time of rendezvous. However, if each agent has no initial  knowledge other than its label,
 time $O(D)$ is usually impossible to achieve. We study the minimum amount of information that has to be available  {\em a priori} to the agents to achieve rendezvous in optimal time $\Theta(D)$.
Following the standard paradigm of {\em algorithms
with advice}, this information is provided to the agents at the start by an oracle knowing the entire instance of the problem, i.e., the network, the starting  positions of the agents, their wake-up rounds,  and both of their labels. The oracle
helps the agents by providing them with the {\em same} binary string called {\em advice}, which can be used by the agents during their navigation. The length of this
string is called the {\em size of advice}.  Our goal is to find the smallest size of advice which enables the agents to meet in time $\Theta(D)$. We show that this optimal size of advice is $\Theta(D\log(n/D)+\log\log L)$. 
The upper bound is proved by constructing an advice string of this size, and providing a natural rendezvous algorithm using this advice that works in time $\Theta(D)$
for all networks.
The matching lower bound, which is the main contribution of this paper, is proved by exhibiting classes of networks for which it is impossible to achieve
rendezvous in time $\Theta(D)$
with smaller advice.

\vspace{2ex}

\noindent {\bf Keywords:} rendezvous, advice, deterministic distributed algorithm, mobile agent, time. 
\end{abstract}

\vspace{2ex}

\vfill

\thispagestyle{empty}
\pagebreak

\section{Introduction}

\subsection{Background}

Two mobile agents, starting from different nodes of a network, have to meet at the same node at the same time.
This distributed task is known as {\em rendezvous} and has received a lot of attention in the literature.
Agents can be any mobile autonomous entities. They might represent human-made objects, such as software agents in computer networks or mobile robots navigating in a network of corridors in a building or a mine. They might also be natural, such as animals seeking a mate,
or people who want to meet in an unknown city whose streets form a network. 
The purpose of meeting in the case of software agents or mobile robots might be the exchange of data previously collected by the agents
or samples collected by the robots. It may also be the coordination
of future network maintenance tasks, for example checking functionality of websites or of sensors forming a network, or decontaminating corridors of a mine.

\subsection{Model and Problem Description}

The network is modeled as an undirected connected graph with $n$ unlabeled nodes.
We seek deterministic rendezvous algorithms that do not
rely on the agents perceiving node identifiers, and therefore can work in anonymous graphs as well  (cf. \cite{alpern02b}). 
The reason for designing such algorithms
is that, even when nodes of the network have distinct identifiers, agents may be unable to perceive them
because of limited sensory capabilities (e.g., a mobile robot may be unable to read signs at corridor crossings), 
or nodes may be unwilling to reveal their identifiers to software agents, e.g., due to security or privacy reasons.
From a methodological point of view, if nodes had distinct identifiers visible to the agents, the agents could explore the graph and meet at the node
with the smallest identifier. Hence, in this case, rendezvous
reduces to graph exploration.

On the other hand, we assume that, at each node $v$,
each edge incident to $v$ has a distinct {\em port number} from the set 
$\{0,\dots,d-1\}$, where $d$ is the degree of $v$. These port numbers are fixed and visible to the agents.
Port numbering is {\em local} to each node, i.e., we do not assume any relation between
port numbers at  the two endpoints of an edge. Note that in the absence of port numbers, edges incident to a node
would be undistinguishable for agents and thus rendezvous would be often impossible, 
as an agent may always miss some particular edge incident to the current node, and this edge could be a bridge to the part of the graph
where the other agent started.
The previously mentioned security and privacy reasons for not revealing node identifiers to software agents are irrelevant in the case of port numbers. If
the graph models a system of corridors of a mine or a building, 
port numbers can be made implicit, e.g., by marking one edge at each intersection
(using a simple mark legible even by a mobile robot with very limited vision),
considering it as corresponding to port 0, and all other port numbers increasing clockwise.

Agents are initially located at different nodes of the graph and  traverse its edges in synchronous rounds.
They cannot mark visited nodes or traversed edges in any way, and they cannot communicate before meeting.
The adversary wakes up each of the agents, possibly in different rounds. 
Each agent starts executing the algorithm in the round of its wake-up.
It has a clock that ticks at each round and starts at the wake-up round of the agent.
In each round, each agent either remains at the current node,
or chooses a port in order to move to one of the adjacent nodes. 
When an agent enters a node, it learns the node's degree and the port  number by which it enters the node. When agents cross each other
on an edge while traversing it simultaneously in different directions, they do not notice this fact.

Each agent has a distinct integer label from a fixed {\em label space} $\{1,\dots,L\}$, which it can
use as a parameter in the same deterministic algorithm that both agents execute. It does not know the label nor the starting round of the other agent. 
Since we study deterministic rendezvous, the absence of distinct labels would preclude the possibility of meeting in highly
symmetric graphs, such as rings or tori, for which there exist non-trivial port-preserving automorphisms. Indeed, in such graphs,
identical agents starting simultaneously and executing the same deterministic algorithm can never meet, since they will keep the same positive distance in every round. 
Hence, assigning different labels to agents is the only way to break symmetry, as is needed to meet in every graph using a deterministic algorithm. 
On the other hand, if agents knew
each other's identities, then the smaller-labelled agent could stay inert, while the other agent would try to find it. In this case rendezvous reduces to graph exploration.   
Assuming such knowledge, however, is not realistic, as agents are often created independently in different parts of the graph, and they know nothing about each other
prior to meeting.


The rendezvous is defined as both agents being at the same node in the same round.
The main efficiency measure of a rendezvous algorithm's performance is its {\em time} , i.e., the number of rounds from the start of the later agent until the meeting.
If $D$ is the distance between the initial positions of the agents, then $\Omega(D)$ is an obvious lower bound on the time of rendezvous. However, if the agents have no additional knowledge,
time $O(D)$ is usually impossible to achieve. This is due to two reasons. First, without any knowledge about the graph, even the easier task of {\em treasure hunt} \cite{TSZ07}, in
which a single agent must find a target (treasure) hidden at an unknown node of the graph, takes asymptotically larger time  in the worst case. Treasure hunt is equivalent to a special case of rendezvous  where one of the agents is inert. In the worst case, this takes as much time as graph exploration, i.e., having a single agent visit all nodes.
Second, even when the graph is so simple that navigation of the agents is not a problem, breaking symmetry between the agents, which is often necessary to achieve a meeting, may take time larger than $D$.
Indeed, even in the two-node graph, where $D=1$, rendezvous requires time $\Omega(\log L)$ \cite{DFKP}. 

We study the amount of information that has to be given  {\em a priori} to the agents to achieve rendezvous in optimal time $\Theta(D)$.
Following the paradigm of {\em algorithms
with advice}  \cite{AKM01,CFP,CFIKP,DP,EFKR,FGIP,FIP1,FIP2,FKL,FP,FPR,GPPR02,IKP,KKKP02,KKP05,SN,TZ05}, this information is provided to the agents at the start, by an oracle knowing the entire instance of the problem, i.e., the graph, the starting  positions of the agents, their wake-up rounds, and both of their labels. The oracle
helps the agents by providing them with the {\em same} binary string called {\em advice}, which can be used by each agent, together with its own label, during the execution of the algorithm. The length of this
string is called the {\em size of advice}.  Our goal is to find the smallest size of advice (up to constant factors) which enables the agents to meet in time $\Theta(D)$.
In other words we want to answer the question:

\begin{quotation}
What is the minimum information that permits the fastest possible rendezvous?
\end{quotation}

where both ``minimum'' and ``fastest'' are meant up to multiplicative constants. 

Notice that, since the advice given to both agents is identical, it could not help break symmetry if agents did not have distinct labels.
Hence, even with large advice, the absence of distinct labels would preclude rendezvous in highly symmetric networks, as argued above.
Using the framework of advice permits us to quantify the amount of information
needed for an efficient solution of a given network problem (in our case, rendezvous), regardless of the type of information that is provided.

\subsection{Our Results}
\label{subsec:ourresults}

For agents with labels from the set $\{1,\dots,L\}$,
we show that, in order to meet in optimal time $\Theta(D)$ in $n$-node networks, the minimum size of advice that has to be provided to the agents is  $\Theta(D\log(n/D)+\log\log L)$. 
The upper bound is proved by constructing an advice string of this size, and providing a natural rendezvous algorithm using this advice that works in time $\Theta(D)$
for all networks.
The matching lower bound, which is the main contribution of this paper, is proved by exhibiting classes of networks for which it is impossible to achieve rendezvous in time $\Theta(D)$
with smaller advice. 

Our algorithm works for arbitrary starting times of the agents, and our lower bound is valid even for simultaneous start.
As far as the memory of the agents is concerned, our algorithm has very modest requirements: an agent must only be able to store the advice and its own label. Hence
memory of size $\Theta(D\log(n/D)+\log L)$ is sufficient. On the other hand, our lower bound on the size of advice holds even for agents with unlimited memory.


\subsection{Related Work}
\label{subsec:relatwork}

The problem of rendezvous has been studied both under randomized and deterministic scenarios.
An extensive survey of  randomized rendezvous in various models  can be found in
\cite{alpern02b}, cf. also  \cite{alpern95a,alpern02a,anderson90,baston98}. 
Deterministic rendezvous in networks has been surveyed in \cite{Pe}.
Several authors
considered geometric scenarios (rendezvous in an interval of the real line, e.g.,  \cite{baston98,baston01},
or in the plane, e.g., \cite{anderson98a,anderson98b}).
Gathering more than two agents was studied, e.g., in \cite{fpsw,lim96,thomas92}.

For the deterministic setting, many authors studied the feasibility and time complexity of rendezvous. For instance, deterministic rendezvous of agents that are equipped with tokens used to mark nodes was considered, e.g., in~\cite{KKSS}. Most relevant to our work are the results about 
deterministic rendezvous in arbitrary graphs, when the two agents cannot mark nodes, but have unique labels  \cite{DFKP,KM,TSZ07}.
In \cite{DFKP}, the authors present a rendezvous algorithm whose running time is polynomial in the size of the graph, in the length of the shorter
label and in the delay between the starting times of the agents. In \cite{KM,TSZ07}, rendezvous time is polynomial in the first two of these parameters and independent of the delay.

Memory required by the agents to achieve deterministic rendezvous was studied in \cite{FP2} for trees and in  \cite{CKP} for general graphs.
Memory needed for randomized rendezvous in the ring is discussed, e.g., in~\cite{KKPM08}. 

Apart from the synchronous model used in this paper, several authors investigated asynchronous rendezvous in the plane \cite{CFPS,fpsw} and in network environments
\cite{BCGIL,CLP,DGKKP,DPV}.
In the latter scenario, the agent chooses the edge to traverse, but the adversary controls the speed of the agent. Under this assumption, rendezvous
at a node cannot be guaranteed even in very simple graphs. Hence the rendezvous requirement is relaxed to permit the agents to meet inside an edge.

Providing nodes or agents with arbitrary kinds of information that can be used to perform network tasks more efficiently has been
proposed in \cite{AKM01,CFP,CFIKP,DP,EFKR,FGIP,FIP1,FIP2,FKL,FP,FPR,GPPR02,IKP,KKKP02,KKP05,SN,TZ05}. This approach was referred to as
{\em algorithms with advice}.  
Advice is given either to nodes of the network or to mobile agents performing some network task.
In the first case, instead of advice, the term {\em informative labeling schemes} is sometimes used.
Several authors studied the minimum size of advice required to solve
network problems in an efficient way. 

 In \cite{KKP05}, given a distributed representation of a solution for a problem,
the authors investigated the number of bits of communication needed to verify the legality of the represented solution.
In \cite{FIP1} the authors compared the minimum size of advice required to
solve two information dissemination problems using a linear number of messages. 
In \cite{FKL} it was shown that advice of constant size given to the nodes enables the distributed construction of a minimum
spanning tree in logarithmic time. 
In \cite{EFKR} the advice paradigm was used for online problems.
In \cite{FGIP} the authors established lower bounds on the size of advice 
needed to beat time $\Theta(\log^*n)$
for 3-coloring cycles and to achieve time $\Theta(\log^*n)$ for 3-coloring unoriented trees.  
In the case of \cite{SN} the issue was not efficiency but feasibility: it
was shown that $\Theta(n\log n)$ is the minimum size of advice
required to perform monotone connected graph clearing.
In \cite{IKP} the authors studied radio networks for
which it is possible to perform centralized broadcasting in constant time. They proved that constant time is achievable with
$O(n)$ bits of advice in such networks, while
$o(n)$ bits are not enough. In \cite{FPR} the authors studied the problem of topology recognition with advice given to nodes. 
In \cite{DP} the task of drawing an isomorphic map was executed by an agent in a graph and the problem was to determine the minimum advice that has to be given to the agent
for the task to be feasible.

Among the papers using the paradigm of advice, \cite{CFIKP,FIP2} are closest to the present work, as they both concern the task of graph exploration by an agent.
In \cite{CFIKP} the authors investigated the minimum size of advice that has to be given to unlabeled nodes (and not to the agent)
to permit graph exploration by an agent modeled as a $k$-state automaton.
In \cite{FIP2} the authors
established the size of advice that has to be given to an agent in order to explore trees while obtaining 
competitive ratio better than 2.
To the best of our knowledge, rendezvous with advice has never been studied before.

\section{Algorithm and Advice}

Consider any $n$ node graph, and suppose that the distance between the initial positions of the agents is  $D$.
In this section, we construct an advice string of length $O(D\log(n/D)+\log\log L)$ and a rendezvous algorithm which achieves time $D$ using this advice.
We first describe the advice string. Let $G$ be the underlying graph and let $\ell_1$ and $\ell_2$ be the distinct labels of the agents, both belonging to the  label space $\{1,\dots,L\}$. Call the agent with label $\ell_1$ the first agent and the agent with label $\ell_2$ the second agent.
Let $x$ be the smallest index such that the binary representations of $\ell_1$ and $\ell_2$ differ on the $x$th bit. Without loss of generality assume that the $x$th bit is 0
in the binary representation of  $\ell_1$ and 1 in the  binary representation of $\ell_2$.

Let $P$ be a fixed shortest path in $G$ between the initial positions $u$ and $v$ of the agents.
The path $P$ induces two sequences of ports of length $D$: the sequence $\pi'$ of consecutive ports to
be taken at each node of path $P$ to get from $u$ to $v$, and the sequence $\pi''$ of consecutive ports to
be taken at each node of path $P$ to get from $v$ to $u$. Let $\pi\in \{\pi',\pi''\}$ be the sequence 
corresponding to the direction from the initial position of the second agent to the initial position of the first agent.
Denote $\pi=(p_1,\dots,p_D)$. Let $A_i$, for $i=1,\dots, D$, be the binary representation of the integer $p_i$. Additionally,
let $A_0$ be the binary representation of the integer $x$. The binary strings  $(A_0,\ldots,A_{D})$ will be called substrings.
 
The sequence of substrings $(A_0,\ldots,A_{D})$ is encoded into a single advice string to pass to the algorithm. More specifically, the sequence is encoded by doubling each digit in each substring and putting 01 between substrings. This permits the agent to unambiguously decode the original sequence. Denote by $Concat(A_0,\ldots,A_{D})$ this encoding and let $Decode$ be the inverse (decoding) function, i.e., $Decode(Concat(A_0,\ldots,A_{D}))=(A_0,\ldots,A_{D})$.
As an example, $Concat((01),(00)) = (0011010000)$. Note that the encoding increases the total number of advice bits by a constant factor. The advice string 
given to the agents is ${\cal A}= Concat(A_0,\ldots,A_{D})$.

The idea of the Algorithm {\tt Fast Rendezvous} using the advice string $\cal A$  is the following. Each agent decodes the sequence $(A_0,\ldots,A_{D})$ from the string
$\cal A$. Then each agent looks at the $x$th bit of its label, where $x$ is the integer represented by $A_0$. If this bit is 0, the agent stays inert at its initial position, otherwise it takes the 
consecutive ports $p_1,\dots,p_D$, where $p_i$, for $i=1,\dots, D$, is the integer with binary representation $A_i$. After these $D$ moves, the agent meets the other agent
at the latter's initial position. Below is the pseudocode of the algorithm.

\begin{center}
\fbox{
\begin{minipage}{0.8\columnwidth}\small

{\bf Algorithm} {\tt Fast Rendezvous}\\

Input: advice string $\cal A$, label $\ell$.\\

$(A_0,\ldots,A_{D}):= Decode({\cal A})$\\
$x:=$ the integer with binary representation $A_0$.\\
{\bf if} the $x$th bit of $\ell$ is 1 {\bf then}\\
\hspace*{1cm}{\bf for} $i=1$ {\bf to} $D$ {\bf  do}\\
\hspace*{2cm}$p_i:=$ the integer with binary representation $A_i$\\
\hspace*{2cm}take port $p_i$\\
stop.

\end{minipage}
}
\end{center}

\begin{theorem}\label{ub}
Let $G$ be an $n$-node graph with two agents initially situated at distance $D$ from one another.
Algorithm {\tt Fast Rendezvous} achieves rendezvous in time $D$, using advice of size
 $O(D\log(n/D)+\log\log L)$.
\end{theorem}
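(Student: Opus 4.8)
The plan is to verify two independent claims: that Algorithm {\tt Fast Rendezvous} is correct and terminates within $D$ rounds, and that the advice string $\cal A$ has length $O(D\log(n/D)+\log\log L)$. For correctness, I would first observe that both agents decode the \emph{same} sequence $(A_0,\dots,A_D)$ and hence extract the same index $x$ (and the same value of $D$, recovered as the number of substrings minus one). Since $x$ is the first bit on which $\ell_1$ and $\ell_2$ differ, exactly one agent finds its $x$th bit equal to $1$; by the normalization this is the second agent, while the first agent finds a $0$ and therefore never moves. The moving (second) agent executes the ports $p_1,\dots,p_D$, which by construction form the sequence $\pi$ leading along the fixed shortest path $P$ from the second agent's starting node $v$ to the first agent's starting node $u$. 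Thus after exactly $D$ edge traversals the second agent stands at $u$, where the first agent has remained from the outset. Since the stationary agent occupies $u$ in every round, rendezvous occurs no later than $D$ rounds after the wake-up of the later agent, and in exactly $D$ rounds when the moving agent starts no earlier than the other (in particular for simultaneous start), as claimed.

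For the advice size, I would bound the two groups of substrings separately. The substring $A_0$ encodes $x$, and since $\ell_1,\ell_2\le L$ their binary representations have length $O(\log L)$, forcing $x=O(\log L)$ and hence $|A_0|=O(\log x)=O(\log\log L)$. For the port substrings, writing $w_0=v,w_1,\dots,w_D=u$ for the nodes of $P$, the port $p_i$ is taken at $w_{i-1}$, so $p_i<\deg(w_{i-1})$ and $|A_i|=O(1+\log\deg(w_{i-1}))$. Because the doubling-plus-separator encoding inflates the length only by a constant factor, it remains to prove the single estimate $\sum_{i=0}^{D-1}\log\deg(w_i)=O(D\log(n/D))$.

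This last estimate is the crux, and I expect it to be the main obstacle. The key structural input is that $P$ is a shortest path: if $w_i$ and $w_j$ were adjacent with $|i-j|\ge 2$, the path could be shortcut, so the only edges among path nodes are the $D$ path edges; and if an off-path node $z$ were adjacent to both $w_i$ and $w_j$, then $|i-j|=\dist(w_i,w_j)\le 2$, so each off-path node has at most three neighbours on $P$. Counting the edge-endpoints incident to $P$ (path edges counted twice, off-path edges once) then gives $\sum_{i=0}^{D}\deg(w_i)\le 2D+3(n-D-1)=O(n)$. Finally, since $\log$ is concave, Jensen's inequality yields $\sum_{i=0}^{D-1}\log\deg(w_i)\le D\log\!\big(\tfrac1D\sum_{i=0}^{D-1}\deg(w_i)\big)\le D\log(O(n)/D)=O(D\log(n/D))$. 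Combining this with the bound on $|A_0|$ and the constant-factor encoding overhead gives total advice size $O(D\log(n/D)+\log\log L)$, which completes the proof.
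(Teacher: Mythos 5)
Your proof is correct and follows essentially the same route as the paper's: correctness of the algorithm is dispatched directly, $|A_0|=O(\log\log L)$ comes from $x=O(\log L)$, and the port substrings are bounded by combining the fact that the degrees along a shortest path sum to $O(n)$ with a concavity (Jensen/AM--GM) step, exactly as in the paper. The only divergence is that the paper cites the $O(n)$ degree-sum bound as well known (cf.\ \cite{FP1}) and maximizes the product of degrees directly, whereas you prove that bound from scratch via the no-chord and three-neighbours-per-off-path-node argument --- a correct, self-contained addition, but not a different method.
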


\begin{proof}
The correctness and time of the algorithm are straightforward. It remains to prove that the
 length of the advice string $\cal A$ is $O(D\log(n/D)+\log\log L)$. To do this, it is enough to show that the sum of the lengths $z_i$ of substrings $A_i$, for $i=0,\dots, D$,
 is $O(D\log(n/D)+\log\log L)$. 
 
 Note that $z_0=\lceil \log x \rceil \in O(\log\log \ell) \subseteq O( \log\log L)$. 
 Also, note that
 $p_i \leq deg(v_i)$, where $v_1,\dots, v_D$ are consecutive nodes on some shortest path between the initial positions of the agents.
 It is well-known (cf. e.g.,  \cite{FP1}) that the sum of degrees on a shortest path between any two nodes in an $n$-node graph is bounded above
 by $3n$. Hence $p_1+\dots +p_D \leq 3n$. We have 
 $$z_1+\dots +z_D \leq  \lceil \log deg(v_1) \rceil+\dots + \lceil \log deg(v_D) \rceil \leq D+ \log(\prod_{i=1}^{D}deg(v_i)).$$
 The product of $D$ positive numbers whose sum is at most $3n$ is maximized when they are all equal to $3n/D$.
 Hence $z_1+\dots +z_D \leq D + D\log (3n/D) \in O(D\log (n/D))$. This concludes the proof.
\end{proof}

\section{Lower Bound}

In this section, we prove a lower bound on the size of advice permitting rendezvous in optimal time $O(D)$, where $D$ is the initial distance between the agents. This lower bound will match the upper bound established in
Theorem \ref{ub}, which says that, for an arbitrary $n$-node graph, rendezvous can be achieved in time $O(D)$ using advice of size
 $O(D\log(n/D)+\log\log L)$. In order to prove that this size of advice cannot be improved in general, we present two classes of graphs: one that requires
 advice $\Omega(D\log(n/D))$ and another that requires advice $\Omega(\log\log L)$ to achieve optimal time of rendezvous. To make the lower bound even stronger, we show that it holds even in the scenario where agents start simultaneously. 
 
 The $\Omega(D\log(n/D))$  lower bound will be first proved for the simpler problem of treasure hunt. Recall that in this task,
 a single agent must find a stationary target (treasure) hidden at an unknown node of the graph at distance $D$ from the initial position of the agent. 
 We then show how to derive the same lower bound on the size of advice for the rendezvous problem.
 
 The following technical lemma gives a construction of a graph which will provide the core of our argument for the $\Omega(D\log(n/D))$ lower bound.

\begin{lemma}\label{lowerboundgraph}
Let $n$ and $D$ be positive integers such that $D\leq n/2$.
Consider any treasure-hunting algorithm $A$ that takes $Dz$ bits of advice. For any fixed even integer $k \in \{2,\ldots,n-1\}$ and every integer $\ell \in \{1, \ldots, \min\{\left\lfloor\frac{D}{2}\right\rfloor,\left\lfloor\frac{n-1}{k}\right\rfloor\}\}$, there exists a graph of size $k\ell+1$, an initial position of the agent in this graph,  and a location of the treasure at distance $2\ell$ from this initial position, for which algorithm $A$ uses $\Omega(\frac{\ell k^2}{2^{2z}})$ rounds.
\end{lemma}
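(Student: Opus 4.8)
The plan is to prove this by a counting (pigeonhole) argument against a family of mutually indistinguishable treasure-hunt instances. First I would build the graph as a chain of $\ell$ identical \emph{gadgets} $B_1,\dots,B_\ell$, each on $k$ nodes, glued in sequence and hung off the starting node $s$, so that the whole graph has $k\ell+1$ nodes and the treasure $t$ sits at the far end at distance exactly $2\ell$ (each gadget contributes $2$ to the distance). Inside a gadget I would hide a secret: an adversarially chosen wiring of the local port numbers so that exactly one sequence of port choices leads from the gadget's entrance to its exit (hence toward $t$), while every other choice leads into a dead-end that must be abandoned. The crucial design requirement is that all gadgets look \emph{locally identical} from the agent's viewpoint, with the same degrees and port ranges, so that before actually reaching $t$ an agent cannot tell one member of the family from another. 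Letting the per-gadget secret range over its $N=\Theta(k^2)$ possibilities independently across gadgets yields a family $\mathcal{F}$ of $N^\ell$ instances, all with the same start, the same treasure distance $2\ell$, and the same visible structure.

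Next I would invoke the advice. Since $A$ uses only $Dz$ bits, there are at most $2^{Dz}$ distinct advice strings, so by pigeonhole some fixed string $\sigma$ is handed out on at least $N^\ell/2^{Dz}$ instances of $\mathcal{F}$. For every instance receiving $\sigma$, the agent runs the same deterministic algorithm on the same advice, and since the instances are indistinguishable until the treasure is found, it produces one and the same sequence of port choices, a single fixed walk $W$, up to the moment it first reaches $t$. Thus bounding the running time of $A$ on the worst instance reduces to a purely combinatorial question about $W$: how many instances of $\mathcal{F}$ can a single walk of length $T$ steer to the treasure?

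The heart of the argument is to bound this solving power. Reaching $t$ forces $W$ to pass correctly through all $\ell$ gadgets in order, and each wrong port choice inside a gadget wastes $\Omega(k)$ rounds backtracking out of a dead-end, so ruling out the $\Theta(k)$ candidate doors of one gadget costs $\Theta(k^2)$ rounds; this is the origin of the $k^2$ factor. Because the gadgets are traversed strictly sequentially, the budget $T$ must be split among them, and a recursive count shows that $W$ can reach $t$ in at most $\big(O(T/\ell)\big)^{\ell}$ of the instances sharing $\sigma$. Combining with the pigeonhole bound gives $\big(O(T/\ell)\big)^{\ell}\ge N^\ell/2^{Dz}$; taking $\ell$-th roots reduces everything to a single per-gadget inequality of the form $O(T/\ell)\ge \Theta(k^2)/2^{\,\Theta(z)}$, whence $T=\Omega\!\big(\ell k^2/2^{2z}\big)$.

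I expect the main obstacle to be twofold. The delicate part of the construction is engineering the gadgets so that the family is simultaneously large (about $k^2$ secrets per gadget), strictly indistinguishable until $t$ is reached (so that ``same advice $\Rightarrow$ same walk'' is legitimate despite the locally renumbered ports and despite the agent possibly noticing dead-ends), and genuinely forces $\Omega(k^2)$ rounds per gadget in the worst case. The second delicate point is the counting step: one must control how the walk's budget is apportioned across the $\ell$ sequential gadgets and use the hypothesis $2\ell\le D$ to convert the global advice budget $Dz$ into the effective per-gadget term $2^{2z}$ in the denominator, which is exactly where the claimed exponent $2z$ comes from.
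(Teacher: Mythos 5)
Your overall architecture matches the paper's proof: a chain of $\ell$ gadgets, each carrying one of $\Theta(k^2)$ secrets (the paper uses $k$-cliques with ports given by a proper edge-colouring, the secret of clique $H_i$ being which of its $(k-1)(k-2)/2$ edges not incident to the gate is subdivided by the next gate), a pigeonhole step over the $2^{Dz}$ advice strings, a count of how many same-advice instances can be solved within a time budget $T$ split across the $\ell$ gadgets, and an $\ell$-th root at the end using $D=\Theta(\ell)$ to turn $2^{Dz/\ell}$ into $2^{\Theta(z)}$. However, there is a genuine gap at the step that legitimizes the counting. You claim the instances sharing an advice string are ``indistinguishable until the treasure is found,'' so that the agent performs one and the same walk $W$ on all of them. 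This cannot be engineered, and in fact contradicts the possibility of any time lower bound on such a family: to advance from gadget $i$ to gadget $i+1$, the agent must locate and use the exit of gadget $i$, and the location of that exit \emph{is} the secret of gadget $i$ (if it were the same in all instances, an algorithm hard-wired with the exit sequence would solve every instance in time $2\ell$ with no advice at all). Hence the agent necessarily learns the $i$-th secret before entering gadget $i+1$, two instances differing in gadget $i$ are distinguished well before the treasure is reached, and from that round on the deterministic walks may diverge. The reduction ``same advice $\Rightarrow$ same walk $W$,'' on which your entire combinatorial count rests, is therefore unsound.

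The paper's proof shows the correct, weaker statement that suffices: on two same-advice instances the walks agree only up to the first round where the agent observes a difference, namely the round in which it reaches the next gate in one instance but not in the other (after restricting attention, without loss of generality, to ``normal'' walks that never re-exit a gate backwards). This prefix agreement alone implies that the map $\tau$ sending each same-advice instance to the tuple $(t_1,\dots,t_\ell)$, where $t_i$ is the number of edge traversals performed inside gadget $i$, is \emph{injective}: if two instances first differ in gadget $j$, then $t_i=t'_i$ for $i<j$ but $t_j\neq t'_j$. Injectivity bounds the number of same-advice instances by the number of positive integer $\ell$-tuples with sum at most $T$, which is at most $T^{\ell}/\ell\,!$, and comparing with the pigeonhole bound $N/2^{Dz}$ and applying Stirling's formula yields $T\in\Omega(\ell k^2/2^{2z})$ --- exactly your final inequality, but reached without the impossible indistinguishability requirement. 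A secondary flaw: your gadget mechanism of $\Theta(k)$ candidate doors, each wasting $\Omega(k)$ rounds of backtracking, supplies only $\Theta(k)$ secrets per gadget, whereas the counting needs $\Theta(k^2)$ of them; in the paper each secret is an \emph{edge} of the clique, testing an edge costs $O(1)$ rounds, and the $k^2$ factor comes from the number of secrets, not from a per-door backtracking cost.
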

\begin{proof}
We define a class of graphs $\ourclass$ such that each graph in $\ourclass$ has $k\ell + 1$ nodes. We will prove that there is a non-empty subset $B$ of $\ourclass$ such that, on each graph in $B$, algorithm $A$ uses $\Omega(\frac{\ell k^2}{2^{2z}})$ rounds to complete treasure hunt, for some initial position of the agent and a location of the 
treasure at distance $2\ell$ from this location. 

Each graph $G$ in the class consists of $\ell$ copies of a $k$-clique $H$ 
(with a port numbering to be described shortly), which are chained together in a special way. We will refer to these cliques as $H_1,\ldots,H_\ell$. 

Let $v_1,\ldots,v_k$ denote the nodes of $H$. It should be stressed that names of nodes in cliques are for the convenience of the description only, and they are not visible to the agent. 
We choose an arbitrary edge-colouring of $H$ using the colours $\{0,\ldots,k-2\}$, which is always possible for cliques of even size, cf. e.g., \cite{Gib}, Theorem 7.6. For an arbitrary edge $e$ in $H$, let $c(e)$ denote the colour assigned to $e$. The port numbers of $H$ are simply the edge colours, i.e., for any edge $\{u,v\}$, the port numbers corresponding to this edge at $u$ and $v$ are both equal to $c(\{u,v\})$. 

Each graph $G \in \ourclass$ is obtained by chaining together the copies $H_1,\ldots,H_\ell$ of the clique $H$ in the following way. We will call node $v_1$ in clique $H_i$ the \emph{gate} $g_i$ of $H_i$. The initial position of the agent is $g_1$. Each gate $g_i$, for $i > 1$, is placed on (i.e., subdivides) one of the edges of clique $H_{i-1}$ not incident to $g_{i-1}$. We denote this edge by $e_{i-1}$. Finally, an additional \emph{treasure node} $\treasure$ is placed on  (i.e., subdivides) one of the edges of clique $H_\ell$ not incident to $g_{\ell}$, and this edge is denoted by $e_\ell$.
Hence $g_1$ has degree $k-1$, each $g_i$, for $1<i\leq \ell$, has degree $k+1$, and $g_{\ell +1}$ has degree 2, cf. Figure 1 (a).
Note that, since $g_i$, for $i > 1$, subdivides an edge that is not incident to $g_{i-1}$, we have $D=2\ell$.  
Port numbering of graph $G$ is the following. Port numbers in each clique $H_{i}$ are
unchanged, the new port numbers at each node $g_{i}$, for $1< i  \leq \ell$, are $k-1$ and $k$, with $k-1$ corresponding to the edge whose other endpoint has smaller index,
and the new port numbers at node $g_{\ell +1}$ are $0$ and $1$, with $0$ corresponding to the edge whose other endpoint has smaller index,
cf. Figure 1 (b).
All graphs in the class $\ourclass$ are isomorphic and differ only by port numbering. 
Note that each graph in $\ourclass$ is uniquely identified by the sequence of edges $(e_1,\ldots,e_\ell)$. Therefore, the number of graphs in $\ourclass$ is 
$N = ((k-1)(k-2)/2)^\ell$.

\begin{figure}[!ht]
\begin{center}
\includegraphics[scale=0.6]{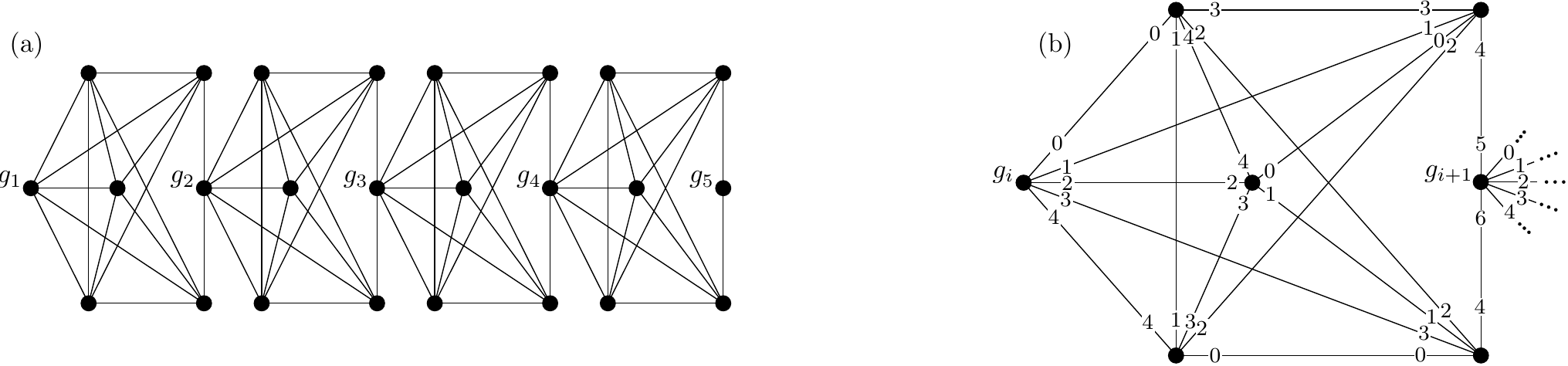}
\end{center}
\caption{(a) A graph in $\classletter (6,4)$ (b) Port numbering of each clique $H_i$, for $i<\ell$, with gate $g_{i+1}$ inserted }
\label{GraphDiagrams}
\end{figure}

Notice that an agent navigating in a graph $G \in \ourclass$  always knows when it arrives at a gate $g_i$, for $1<i\leq \ell$, because these are the only nodes of degree
$k+1$. Call a walk of an agent {\em normal}, if the agent visits each gate $g_i$, for $1<i\leq \ell$, exactly once (i.e., never exits a gate by port $k-1$ or $k$).
It is enough to prove our lower bound on the time of treasure hunt only for algorithms where the agent always performs a normal walk. Indeed,  for any walk, there exists a normal walk using at most the same time. From now on we restrict attention to such algorithms.

We prove our lower bound on the class of graphs $\ourclass$. The idea is that, in order to find the treasure node, the agent must visit each of the nodes $g_1,\ldots,\treasure$. To get from $g_i$ to $g_{i+1}$, the agent must find the edge $e_i$ of $H_i$ that the node $g_{i+1}$ subdivides. With little advice, this amounts to searching many edges of the clique $H_i$, and hence increases time.

For any graph $G$, the agent is given some advice string $S$ and executes its treasure-hunting algorithm $A$. 
With $Dz$ bits of advice, there exists a set $B$ of at least $\frac{N}{2^{Dz}}$ graphs for which the agent is given the same advice string. Next, we provide an upper bound on the number of graphs in $B$.  By comparing this upper bound with $\frac{N}{2^{Dz}}$, we will get the desired lower bound on the number of rounds needed to find the treasure.

Let $T$ be the maximum running time of algorithm $A$ on  graphs of class $\ourclass$. 
Let $\tau$ be the function that maps each graph from $B \subseteq \ourclass$ to an $\ell$-tuple $(t_1,\ldots,t_\ell)$, where, for each $i \in \{1,\ldots,\ell\}$, $t_i$ is the number of edge traversals performed by the agent in clique $H_i$. This function is well-defined since we consider only deterministic algorithms. We now prove that this function is injective.

\begin{claim}\label{injective}
For any two graphs $G \neq G'$ in the set $B$, we have $\tau(G) \neq \tau(G')$.
\end{claim}
\begin{proof}
Let $G$ and $G'$ be represented by the  distinct sequences of edges $(e_1,\ldots,e_\ell)$ and $(e'_1,\ldots,e'_\ell)$, respectively.
Let $j$ be the first index for which $e_j\neq e'_j$. Since the advice string for graphs $G$ and $G'$ is the same, the sequence of ports taken by the agent in these graphs
is the same until the agent reaches nodes of different degree in $G$ and $G'$. Let  $\tau(G)=(t_1,\ldots,t_\ell)$ and  $\tau(G')=(t'_1,\ldots,t'_\ell)$.
Hence $t_i=t'_i$ for $i<j$. Without loss of generality assume that $t_j \leq t'_j$. For each round $r\leq t_1+\dots + t_j$, the agent  takes the same port number in $G$ and $G'$. In round $t_1+\dots + t_j+1$ the agent reaches the gate $g_{j+1}$ in $G$. Since $e_j\neq e'_j$, the agent does not reach the gate  $g_{j+1}$ in $G'$ in this round. Since the walk of the agent is normal, we conclude that
$t_j<t'_j$.  
\end{proof}

By Claim \ref{injective}, the number of graphs in $B$ is bounded above by the size of the range of $\tau$. Consider an arbitrary $\ell$-tuple $(t_1,\ldots,t_\ell)$ in the range of $\tau$. By the definition of $\ourclass$, for each $i \in \{1,\ldots,\ell\}$, the agent must traverse at least two edges to get from $g_i$ to $g_{i+1}$. Further, $T$ is an upper bound on the number of edge traversals performed in any execution of the algorithm. Therefore, the size of the range of $\tau$ is bounded above by the number of integer-valued $\ell$-tuples with positive terms whose sum is at most $T$.  Clearly, this is bounded above by the number of real-valued $\ell$-tuples with non-negative terms whose sum is at most $T$, i.e., by the size of the simplex $\Delta_\ell = \{ (t_1,\ldots,t_\ell) \in \mathbb{R}^{\ell}\ |\ \sum\limits_{i=1}^{\ell} t_i = T \textrm{ and } 0 \leq t_i \leq T \textrm{ for all $i$} \}$. From \cite{ellis}, the volume of $\Delta_\ell$ is equal to $T^\ell/\ell !$. Thus, we have shown that the  size of $B$ is bounded above by $T^\ell/\ell !$. Comparing this to our lower bound $\frac{N}{2^{Dz}}$ on the size of $B$, we get
\[
T  \geq   \sqrt[\ell]{\ell ! \frac{N}{2^{Dz}}} 
  \geq   \sqrt[\ell]{\ell ! \frac{((k-1)(k-2)/2)^{\ell}}{2^{Dz}}} 
   \geq   \sqrt[\ell]{\ell !} \frac{(k-2)^{2}/2}{2^{Dz/\ell}}\\
  =  \sqrt[\ell]{\ell !} \frac{(k-2)^{2}/2}{2^{2z}}.
\]

By Stirling's formula we have $\ell ! \geq \sqrt{\ell}(\ell/e)^{\ell}$, for sufficiently large $\ell$. Hence $\sqrt[\ell]{\ell !}\geq  \ell^{1/(2\ell)} \cdot (\ell/e)$.
Since the first factor converges to 1 as $\ell$ grows, we have $\sqrt[\ell]{\ell !} \in \Omega(\ell)$. Hence the above bound on $T$ implies $T \in \Omega(\frac{\ell k^{2}}{2^{2z}})$.
\end{proof}

\begin{theorem}\label{lb1}
Let $n$ and $D$ be positive integers such that $D\leq n/2$.
 If an algorithm  $A$ solves treasure hunting in $O(D)$ rounds whenever the treasure is at distance $D$ from the initial position of the agent, 
 then there exists an $n$-node graph $G$ with treasure at this distance such that $A$ requires $\Omega(D\log(n/D))$ bits of advice.
\end{theorem}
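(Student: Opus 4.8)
The plan is to derive the theorem as a direct consequence of Lemma \ref{lowerboundgraph} through a careful choice of the two free parameters $k$ and $\ell$. Suppose, toward a contradiction, that $A$ solves treasure hunting in at most $cD$ rounds (for treasure at distance $D$) using advice of size $m$, and write $m = Dz$, so that $z = m/D$. My goal is to show $z = \Omega(\log(n/D))$, which immediately yields $m = \Omega(D\log(n/D))$. First I would instantiate the lemma with $\ell = \lfloor D/2 \rfloor$, so that the guaranteed treasure distance $2\ell$ equals $D$ (I treat $D$ as even here and absorb the odd case into constants at the end), and with $k$ chosen to be the largest even integer satisfying $k\ell + 1 \le n$. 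Since $D \le n/2$, one checks that $k \in \{2,\ldots,n-1\}$ and $\ell \le \min\{\lfloor D/2\rfloor, \lfloor (n-1)/k\rfloor\}$, so the hypotheses of the lemma hold; moreover this choice forces $k = \Theta(n/D)$, and in particular $k \ge n/D$ once $n/D$ exceeds a fixed constant.

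The lemma then produces a graph $G_0$ on $k\ell + 1 \le n$ nodes, with the treasure at distance $D$, on which $A$ runs for $\Omega(\ell k^2 / 2^{2z})$ rounds. To obtain an $n$-node graph I would pad $G_0$ (identically across the whole class) by attaching a pendant path of the required length at a node far from the clique structure; this changes neither the distance to the treasure nor the time lower bound, since reaching the treasure still forces the same edge traversals inside the clique chain. Now I combine the two estimates on the running time $T$ of $A$ on this graph: the hypothesis gives $T \le cD = 2c\ell$, while the lemma gives $T \ge \gamma\,\ell k^2 / 2^{2z}$ for some absolute constant $\gamma > 0$. Cancelling the common factor $\ell$ yields $\gamma k^2 / 2^{2z} \le 2c$, i.e. $2^{2z} \ge (\gamma/2c)\,k^2 = \Omega((n/D)^2)$. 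Taking logarithms gives $2z \ge 2\log(n/D) - O(1)$, hence $z \ge \log(n/D) - O(1) = \Omega(\log(n/D))$, and therefore $m = Dz = \Omega(D\log(n/D))$, as required.

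The bulk of the difficulty is already packed into Lemma \ref{lowerboundgraph}; what remains for the theorem is parameter tuning, so I expect the delicate points to be bookkeeping rather than conceptual. The three things to watch are: (i) the parity of $D$, which I dispose of by noting that $D$ and $D-1$ agree up to constants in the $O(D)$ running-time hypothesis; (ii) padding $G_0$ up to exactly $n$ nodes without weakening the bound; and (iii) the additive $O(1)$ in the inequality $z \ge \log(n/D) - O(1)$. Point (iii) is the only genuine subtlety, and it is where I would focus: the conclusion $z = \Omega(\log(n/D))$ is immediate once $n/D$ exceeds a fixed constant (so that $\log(n/D)$ dominates the additive loss), while in the remaining range $2 \le n/D = \Theta(1)$ one has $\log(n/D) = \Theta(1)$ and the claimed bound collapses to $m = \Omega(D)$, a statement I would either argue separately or note is the uninformative boundary of the estimate. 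In every case the final bound $m = \Omega(D\log(n/D))$ follows.
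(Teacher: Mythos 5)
Your parameter choice reproduces, in contrapositive form, the paper's argument for the regime $D \in o(n)$ (its Case~1), but it does not prove the theorem in the complementary regime $D = \Theta(n)$, and your dismissal of that regime as ``the uninformative boundary of the estimate'' is precisely where the gap lies. With $\ell = \lfloor D/2 \rfloor$ and $k$ the largest even integer satisfying $k\ell + 1 \le n$, the hypothesis $D \le n/2$ only guarantees $k \ge 2$; when $D = n/2$ one gets $k = 2$, for which the class $\mathcal{G}(k,\ell)$ is empty (the number of graphs is $N = ((k-1)(k-2)/2)^{\ell} = 0$, since each gate must subdivide a clique edge not incident to the previous gate and a $2$-clique has no such edge), so Lemma~\ref{lowerboundgraph} yields nothing. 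More generally, whenever $n/D$ is below a constant determined by the algorithm's time constant $c$, your inequality $z \ge \log(n/D) - \frac{1}{2}\log(2c/\gamma)$ has a nonpositive right-hand side and gives no bound at all. Yet in this regime the theorem is not vacuous: since $n/D \ge 2$ implies $\log(n/D) \ge 1$, it asserts that $\Omega(D) = \Omega(n)$ bits of advice are necessary --- a substantive claim that is needed for the paper's main result.

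The paper handles this with a second, genuinely different instantiation of the lemma (its Case~2), which your proposal does not contain and which cannot be obtained by tuning your parameters. Assuming advice of size $Dz = o(D\log(n/D)) = o(D)$, write $z = 1/\varphi(n)$ with $\varphi(n) \in \omega(1)$, and take $k \approx \min\{n-D-1, \varphi(n)\}$ (so the clique size is tied to the assumed advice bound itself, not to $n/D$) and $\ell = \lfloor (n-D-1)/k \rfloor$. The treasure inside the clique chain is then at distance $2\ell \ll D$, and one pads with a path so that the treasure sits at distance exactly $D$ in an $n$-node graph. Since $2^{2z} = O(1)$, the lemma forces time $\Omega(\ell k^2) = \Omega((n-D-1)k) = \omega(n) \subseteq \omega(D)$, the desired contradiction. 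The structural point your plan misses is that when $D = \Theta(n)$ one must decouple the treasure's distance from the length of the clique chain: to exceed the time budget $cD = \Theta(cn)$ within a node budget of $n$, the cliques must have size $\omega(1)$, hence the chain can only realize distance $o(D)$, and the remaining distance must be supplied by a cheap pendant path. Your single-case argument, which insists on $2\ell = D$, structurally cannot do this, so the theorem remains unproved for $D = \Theta(n)$.
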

\begin{proof}
We suppose that only $o(D\log(n/D))$ bits of advice are provided, and show that there is an $n$-node graph for which $A$ completes treasure hunt in $\omega(D)$ rounds.

{\bf Case 1.} $D \in o(n)$.\\
The amount of advice provided is $Dz$, where $z < \frac{1}{2}\log(n/D)$. By Lemma \ref{lowerboundgraph} with $k =  \lfloor n/D \rfloor$ or $k= \lfloor n/D \rfloor-1$, whichever number is even, and $\ell = \lfloor (D-1)/2 \rfloor$, there exists a graph $G' \in \ourclass$ such that, for some positive constant $c$, algorithm $A$ uses at least $c\frac{(D/2)(n/D)^2}{2^{2z}} > \frac{c}{2}\frac{n^2/D}{2^{\log(n/D)}} = \frac{cn}{2} \in \omega(D)$ rounds to reach the treasure located at some node $x$ at distance $2\ell$
from the initial position of the agent. 

{\bf Case 2.} $D \in \Omega(n)$.\\
Since $\log (n/D)$ is constant and $o(D\log(n/D))$ bits of advice are provided, it follows that
the amount of advice provided is $Dz$, where $z = 1/\varphi(n)$ for some integer function $\varphi(n)$ such that $\varphi(n) \in \omega(1)$. By Lemma \ref{lowerboundgraph} with $k = \min \{n-D-1, \varphi(n) \}$ or $k=\min \{n-D-1, \varphi(n) \}-1$, whichever number is even, and $\ell = \left\lfloor \frac{n-D-1}{k} \right\rfloor$, there exists a graph $G' \in \ourclass$ such that, for some positive constant $c$, algorithm $A$ uses at least $c\frac{((n-D-1)/k)(k^2)}{2^{2z}} > \frac{c}{4}(n-D-1)k \in \omega(n)$ rounds to reach the treasure located at some node $x$ at distance $2\ell$
from the initial position of the agent.

In both cases, graph $G'$ has $k\ell + 1$ nodes and the treasure is located at distance $2\ell$ from the initial position of the agent.
We obtain a graph $G$ with $n$ nodes by attaching a path of $n-k\ell-1$ nodes  to node $x$ in $G'$. In this graph $G$ the initial position of the agent is unchanged
with respect to $G'$, and 
the treasure is located on the attached path at distance $D$ from this initial position. 
\end{proof}

We now show how to deduce a lower bound on the size of advice for rendezvous (even with simultaneous start) from the lower bound for treasure hunt.

\begin{corollary}\label{cor}
Let $D' \leq n'$ be positive integers. 
There exist $n \in \Theta(n')$ and $D \in \Theta(D')$ such that 
if an algorithm $A$ solves rendezvous in time $O(D)$ in $n$-node graphs
whenever the initial distance between the agents is $D$, then there exists an $n$-node graph for which $A$ requires
$\Omega(D\log(n/D))$ bits of advice.
\end{corollary}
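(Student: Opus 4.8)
The plan is to reduce rendezvous to treasure hunt, so that Theorem \ref{lb1} can be applied almost directly. The key observation is that treasure hunt is simply the special case of rendezvous in which one agent is stationary (inert). So I would take the graph $G$ and initial position of the active agent produced by Theorem \ref{lb1}, and construct a rendezvous instance in which the second agent is placed at the treasure node and never moves. Concretely, given $D'\le n'$, I would invoke Theorem \ref{lb1} with appropriately chosen $n\in\Theta(n')$ and $D\in\Theta(D')$ (the passage through $\Theta$ is forced because Lemma \ref{lowerboundgraph} only constructs graphs of size $k\ell+1$ for even $k$, so exact values of $n'$ and $D'$ need not be attainable, but constant-factor-close ones are). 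This yields an $n$-node graph and a treasure location at distance $D$ for which any $O(D)$-time treasure-hunting algorithm requires $\Omega(D\log(n/D))$ bits of advice.

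Next I would argue the contrapositive: suppose, for contradiction, that there were a rendezvous algorithm $A$ solving rendezvous in time $O(D)$ using only $o(D\log(n/D))$ bits of advice on all such instances. I would use $A$ to build a treasure-hunting algorithm $A'$ contradicting Theorem \ref{lb1}. The natural construction is to have the single treasure-hunting agent simulate the \emph{moving} agent of the rendezvous pair, while the hidden treasure plays the role of the stationary agent sitting at its starting node. Since the agents start simultaneously (the lower bound in Theorem \ref{lb1} and the rendezvous lower bound we want both hold for simultaneous start), the treasure-hunting agent can simulate $A$ step by step, and rendezvous occurs exactly when the moving agent reaches the node occupied by the stationary agent, i.e., exactly when the treasure is found. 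The time and advice size of $A'$ match those of $A$ up to constants.

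The one genuine subtlety to handle carefully is the role of \emph{labels}. In the rendezvous model both agents carry distinct labels, whereas treasure hunt has a single unlabeled agent. The oracle in the rendezvous setting knows both labels and may encode label-dependent information in the advice; but since the advice string is a fixed binary string handed to both agents, I can simply fix the two labels once and for all (say, the two agents receive specific labels $\ell_1,\ell_2$), designate the larger-labelled or the appropriately-chosen agent as the mover, and feed the mover's label together with the advice to the simulating treasure-hunter. The mover then behaves as a deterministic function of (advice, its own label) exactly as in Algorithm \texttt{Fast Rendezvous}, which is precisely what a treasure-hunting algorithm with advice is allowed to do. Thus any advice that suffices for rendezvous also suffices, with the same asymptotic size, to solve the corresponding treasure hunt.

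I expect the main obstacle to be bookkeeping rather than conceptual: ensuring that the simulation is faithful in the model's timing (simultaneous wake-up, synchronous rounds) and that the reduction does not secretly smuggle in extra information. In particular I must verify that designating which agent moves does not itself require advice beyond $O(1)$ bits, and that the distance $D$ between the agents in the rendezvous instance equals the treasure distance, so the optimal-time requirement $\Theta(D)$ transfers cleanly. Once these are checked, the $\Omega(D\log(n/D))$ bound for rendezvous follows immediately from Theorem \ref{lb1} by contradiction, completing the proof.
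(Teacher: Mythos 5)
Your reduction goes in the wrong direction and cannot be made to work as stated. You want to use a hypothetical $O(D)$-time rendezvous algorithm $A$ with small advice to solve treasure hunt, by placing the second agent at the treasure node and declaring it ``stationary (inert)''. But whether an agent moves is decided by the algorithm $A$, not by you: in this model both agents execute the same deterministic algorithm with the same advice string, differing only in their labels, and a general algorithm $A$ may make \emph{both} agents move; the meeting may then occur at a node that is neither agent's starting position. Consequently, the trajectory of the single agent you simulate need not ever pass through the treasure node within $O(D)$ rounds, so your simulated treasure hunter can fail even though $A$ legitimately achieves rendezvous. Your sentence ``rendezvous occurs exactly when the moving agent reaches the node occupied by the stationary agent'' is unjustified for arbitrary $A$. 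You also cannot force inertness by delaying one agent's wake-up, because the lower bound is claimed (and needed) for simultaneous start; nor can the treasure hunter simulate the other agent's moves, since it knows neither the graph nor the treasure's location --- that is the whole point of treasure hunt. This is a genuine conceptual gap, not bookkeeping.

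The paper closes this gap with a different construction. Take the hard treasure-hunt graph $G$ on $2n'$ nodes given by Theorem~\ref{lb1}, with agent start $v$ and treasure $w$, and build $G^*$ from \emph{two disjoint copies} of $G$ joined by an edge between the two copies of $w$; place one agent at the node $v$ of each copy, so that $n=4n'$ and $D=2D'+1$. Now, no matter what algorithm $A$ does, any meeting requires at least one of the agents to traverse the joining edge, hence to reach the node $w$ of its own copy within the $O(D)=O(D')$ time bound. That agent's walk is a deterministic function of the advice and its own label, so it constitutes a successful $O(D')$-time treasure hunt in $G$, and by Theorem~\ref{lb1} the advice must have size $\Omega(D'\log(n'/D'))=\Omega(D\log(n/D))$. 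This symmetric doubling is exactly the device that forces \emph{some} agent to solve treasure hunt without assuming anything about which agent moves, which is what your proposal is missing.
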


\begin{proof}
Let $G$ be a $(2n')$-node graph with a treasure located at distance $D'$ from the initial position of the agent
such that an $O(D')$-time treasure hunting algorithm requires $\Omega(D'\log(n'/D'))$ bits of advice. Such a graph exists by
Theorem \ref{lb1}. Let $v$ be the initial position of the agent and let $w$ be the location of the treasure in graph $G$. 

Let $n=4n'$ and $D=2D'+1$. Hence $n \in \Theta(n')$ and $D \in \Theta(D')$.
Construct the graph $G^*$ which consists of two disjoint copies of $G$ with the treasure locations joined by an edge. Locate two agents in $G^*$, each at the node $v$ in different copies of $G$.
The graph $G^*$ has $n$ nodes and the initial distance between the agents is $D$. In order to accomplish rendezvous in time $O(D)$, at least one of the agents has to
traverse the edge joining the copies, hence it must find the node $w$ in its copy in time $O(D')$,  which requires advice of size $\Omega(D'\log(n'/D'))=\Omega(D\log(n/D))$.
\end{proof}

The second part of our lower bound on the size of advice, i.e., the lower bound $\Omega(\log\log L)$, will be proved on the class of oriented rings.
A ring is {\em oriented} if every edge has port labels 0 and 1 at the two end-points.
Such a port labeling induces orientation of the ring: at each node, we will say that taking port 0 is going clockwise and taking port 1 is going counterclockwise.
We assume that agents operate in an oriented ring of size $n$. In order to make the lower bound as strong as possible, we prove that it holds even for simultaneous start of the agents. 

\begin{theorem}\label{lb2}
Let $D' \leq n'$ be positive integers. Consider any algorithm $A$ that solves rendezvous for agents with labels from the set $\{1,\dots, ,L\}$. 
There exist $n \in \Theta(n')$ and $D \in \Theta(D')$ such that if $A$ uses time  $O(D)$ in the $n$-node oriented ring whenever the initial distance between the agents is $D$, then the required size of advice is
$\Omega(\log\log L)$.
\end{theorem}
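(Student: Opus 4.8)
The plan is to exploit the fact that an oriented ring is a completely homogeneous environment, so that the only resource available for breaking symmetry is the label, and then to run a counting argument over all advice strings. First I would record the \emph{blindness} of the model: in an oriented ring every node has degree $2$, and since each edge carries ports $0$ and $1$ at its two endpoints, the port by which an agent enters a node is forced by the port used to leave the previous one. Hence a moving agent never learns anything about its location, and with a common advice string $a$ and its own label $\ell$ each agent follows a move sequence, and thus a displacement function $g_{a,\ell}(\cdot)$ (clockwise displacement from its starting node), that is determined in advance until it detects a meeting. I would take $n$ large enough compared with the running time $cD$ of $A$ that displacements cannot wrap around the ring; then two agents placed at distance $D$ meet within the budget only if $g_{a,\ell_1}(t)-g_{a,\ell_2}(t)=D$ for some $t\le cD$. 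In particular, two agents executing identical displacement functions stay at distance $D$ forever and never meet, so all symmetry breaking must come from the labels, the advice being common to both.

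Next comes the counting step. Suppose the advice has length at most $s$, so at most $2^{s+1}$ distinct strings are ever used on this ring. I associate with each label $\ell\in\{1,\dots,L\}$ its \emph{signature}: the tuple that lists, for every admissible advice string $a$, the displacement function $g_{a,\ell}|_{[0,cD]}$. If two distinct labels had equal signatures, then on the instance consisting of exactly those two labels the oracle's chosen advice would make the two agents execute identical displacement functions, so by the previous paragraph they would never meet, contradicting correctness of rendezvous. Hence the signature map on $\{1,\dots,L\}$ must be injective.

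The crux is to control the number $C$ of distinct displacement functions realizable under a single advice string. Injectivity of the signature forces $C^{2^{s+1}}\ge L$, hence $s\ge \log\log L-\log\log C-O(1)$. The trivial bound $C\le 3^{cD}$ already gives $s\ge \log\log L-\log(cD)-O(1)$, which proves the theorem whenever the budget $cD$ is not too large, in particular in the regime where $\log\log L$ is the governing term of the combined bound. The main obstacle is to remove the dependence on the possibly large budget $cD$ and bound $C$ by a constant independent of $D$, since a priori an agent can realize exponentially many displacement functions over $cD$ rounds. I would handle this by choosing the antipodal ring $n=2D$, where rotation by $D$ is an orientation-preserving automorphism interchanging the two starting nodes; there the meeting condition becomes $g_{a,\ell_1}(t)-g_{a,\ell_2}(t)\equiv D\pmod{2D}$, and, using $-D\equiv D\pmod{2D}$ together with the proper two-colouring of the even cycle (taking $D$ odd), I expect to be able to collapse the meeting-relevant part of a behaviour to a bounded, parity-type role, exactly as in the two-node graph at distance $1$. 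This would give $C=O(1)$ and therefore $s=\Omega(\log\log L)$.

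Finally I would verify that the constructed ring satisfies $n\in\Theta(n')$ and initial distance $D\in\Theta(D')$ as required, and observe that since every behaviour is fixed from round $0$ the argument applies to simultaneous start, giving the strongest form of the lower bound.
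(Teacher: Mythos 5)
Your first three steps (blindness of the oriented ring, the fact that identical displacement functions preclude meeting, and the injectivity-plus-counting argument over functions from advice strings to behaviours) are sound and are essentially the paper's own skeleton: your displacement functions are the paper's \emph{behaviour vectors}, and your signatures are its functions $\Phi_x$. But the crux --- bounding the number $C$ of meeting-relevant behaviour classes by a constant independent of $D$ --- is exactly the step you leave unproved, and the route you sketch for it does not work. On the antipodal ring $n=2D$ with $D$ odd, the parity observation only shows that two agents whose cumulative numbers of moves have equal parity \emph{at every round} $t\le cD$ never meet. The invariant you must record is therefore the whole parity profile $(g_{a,\ell}(t)\bmod 2)_{t\le cD}$, which is in bijection with the move/stay pattern, so there are $2^{\Theta(cD)}$ classes. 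Thus $C$ remains exponential in $D$ and your counting inequality again yields only $s\ge\log\log L-\log(cD)-O(1)$, no better than the trivial bound. The analogy with the two-node graph is misleading: there the argument succeeds not because of parity but because the time budget $cD=c$ is itself a constant. Moreover, the choice $n=2D$ is incompatible with the statement to be proved: for arbitrary $D'\le n'$ one needs $n\in\Theta(n')$ \emph{and} $D\in\Theta(D')$, while $n=2D$ forces $n'\in\Theta(D')$, which fails, e.g., when $D'$ is constant and $n'$ grows.

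The paper's way around the obstacle is a coarse-graining of \emph{both} space and time at scale $\Theta(D)$, not a parity argument. It sets $D=3D'$, takes a ring of size $n\ge 6D'$ divisible by $D'$, partitions the nodes into blocks of size $D'$ (the agents start at the clockwise-first nodes of blocks $B_0$ and $B_3$) and the rounds into segments of length $D'$, and replaces each behaviour vector by its \emph{meta-behaviour vector}: the sequence of block increments in $\{-1,0,+1\}$ between consecutive segment boundaries. A run of length $cD$ spans only $d\le 3c$ segments, so there are at most $3^{3c}=O(1)$ meta-behaviour vectors; and two agents with the same meta-behaviour vector are three blocks apart at every segment boundary and can drift by at most one block within a segment, hence are never in the same block and never meet. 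Plugging $C\le 3^{3c}$ into your own inequality $C^{|\mathcal{A}|}\ge L$ gives $s\ge\log\log L-O(1)$ with the $O(1)$ depending only on $c$, uniformly in $n$ and $D$, which is what Theorem \ref{lb2} requires. (Your remark that the trivial bound already handles the regime where $\log\log L$ dominates $D$ is correct, and would in fact suffice for the combined bound of Theorem \ref{lb}; but Theorem \ref{lb2} as stated imposes no restriction on the relative size of $D$ and $L$, so the constant bound on $C$ cannot be dispensed with.)
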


\begin{proof}
Assume that $S$ is the advice string given to the agents. Consider an agent with 
label $x \in \{1,\ldots,L\}$ executing algorithm $A$ using advice $S$. The actions of the agent in consecutive rounds until rendezvous are specified by  
a \emph{behaviour vector} $V_x$. In particular, $V_x$ is a sequence with terms from $\{-1,0,1\}$ that specifies, for each round $i$, whether agent  $x$ moves clockwise (denoted by $-1$), remains idle (denoted by $0$), or moves counter-clockwise (denoted by $1$). Note that an agent's behaviour vector is independent of its starting position, since all nodes of the ring look the same to the agent.  This behaviour vector depends exclusively on the label of the agent and on the advice string $S$.

Let $D=3D'$, $m=n'-(n'\mod D')$ and $n=\max(m, 6D')$. Hence $n \in \Theta(n')$, $D \in \Theta(D')$, $D'$ divides $n$, and $n\geq 2D$. 
As the initial positions of the agents,
fix any nodes $v$ and $w$ of the $n$-node oriented ring, where $w$ is at clockwise distance $D$ from $v$. Since $n\geq 2D$,
agents are at distance $D$ in the ring. 
Partition the nodes of the ring into $r$ consecutive blocks $B_0,B_1,\dots, B_{r-1}$ of size $D'$, starting clockwise from node $v$.
Hence the initial positions $v$ and $w$ of the agents are the clockwise-first nodes of block $B_0$ and block $B_3$, respectively.
Since agents start simultaneously, we have the notion of global round numbers counted since their start.
Partition all rounds $1,2,\dots$ into consecutive {\em time segments} of length $D'$. Hence, during any time segment, an agent can be
located in at most two (neighbouring) blocks.

Fix a behaviour vector $V_x$ of an agent with label $x$.
We define its {\em meta-behaviour vector} as a sequence  $M_x$ with terms from $\{-1,0,1\}$ as follows. 
Suppose that the agent is in block $B_j$ in the first round of the $i$-th segment.
The $i$-th term of $M_x$ is $z \in \{-1,0,1\}$, if, in the first round of the $(i+1)$-th time segment, the agent is in the block $B_{j+z}$, where index addition is modulo $r$. 
Since the initial position of an agent  is the clockwise-first node of a block, for a fixed behaviour vector of an agent its meta-behaviour vector is well defined.

Suppose that algorithm $A$ takes at most $cD$ rounds, for some constant $c$. This corresponds to $d$ time segments for some constant $d\leq 3c$. Hence,
all meta-behaviour vectors describing the actions of agents before the meeting are of length $d$ (shorter meta-behaviour vectors can be padded by zeroes at the end.) Let $\cal B$ be the set of sequences of length $d$ with 
terms from $\{-1,0,1\}$. Sequences from $\cal B$ represent possible meta-behaviour vectors of the agents. The size of $\cal B$ is $3^d$. 

Since the initial positions of the agents are in blocks that are separated by two other blocks, agents with the same
meta-behaviour vectors must be in different blocks in every round, and hence they can never meet. Indeed, in the first round of every time segment they must be in blocks separated by two other blocks, and during any time segment, an agent can either stay in the same block or get to an adjacent block.

Suppose that the number of bits of advice is at most
$\frac{1}{2}\log\log L$. It follows that the set $\cal A$ of advice strings is of size at most $\sqrt{\log L}$. For any label $x \in \{1,\ldots,L\}$, let 
$\Phi _x$ be the function from $\cal A$ to $\cal B$, whose value on an advice string $S \in \cal A$ is the meta-behaviour vector of the agent with label $x$ when given the advice string $S$. Functions $\Phi _x$ are well-defined, as the meta-behaviour vector of an agent whose initial position is the clockwise-first node of a block depends only on its behaviour vector, which in turn depends only on the label of the agent and on the advice string.

 If the set ${ \cal B}^{ \cal A}$ of all functions from ${\cal A}$ to ${ \cal B}$ had fewer elements than $L$, then there would exist two distinct labels $x_1$ and $x_2$ of agents
 such that, for any advice string $S$, these agents would have an identical meta-behaviour vector. As observed above, these agents could never meet. This implies
 $(3^d)^{\sqrt{\log L}}\geq |{ \cal B}^{ \cal A}|\geq L$. Hence $d \log 3 \geq \sqrt{\log L}$, which contradicts the fact that $d$ is a constant.
 
 This shows that the number of bits of advice must be larger than $\frac{1}{2}\log\log L \in \Omega(\log\log L)$.
\end{proof}

Corollary \ref{cor} and  Theorem \ref{lb2} imply:

\begin{theorem}\label{lb}
Let $D' \leq n'$ be positive integers.
Consider any algorithm $A$ that solves rendezvous for agents with labels from the set $\{1,\dots, ,L\}$. 
There exist $n \in \Theta(n')$ and $D \in \Theta(D')$ such that, if $A$ takes time $O(D)$ in all $n$-node graphs
whenever the initial distance between the agents is $D$, then the required size of advice is
$\Omega(D\log (n/D)+\log\log L)$.
\end{theorem}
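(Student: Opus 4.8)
The statement is a combination of the two separate lower bounds already established: the $\Omega(D\log(n/D))$ term comes from Corollary \ref{cor} and the $\Omega(\log\log L)$ term from Theorem \ref{lb2}. The plan is therefore to merge these two bounds into a single additive bound holding at one common pair $(n,D)$. The conceptual engine is the elementary observation that a quantity which is simultaneously $\Omega(a)$ and $\Omega(b)$ is $\Omega(a+b)$, since $\max(a,b)\ge (a+b)/2$. So if I can exhibit, for one pair $(n,D)$ with $n\in\Theta(n')$ and $D\in\Theta(D')$, both an $n$-node clique-chain graph (with agents at distance $D$) forcing $\Omega(D\log(n/D))$ advice and an $n$-node oriented ring (with agents at distance $D$) forcing $\Omega(\log\log L)$ advice, then the result follows: an algorithm running in time $O(D)$ on all $n$-node graphs must in particular handle both of these graphs, so its advice is at least each of the two quantities, hence $\Omega$ of their sum.

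First I would fix the pair $(n,D)$ using the more rigid of the two constructions, namely the oriented-ring construction of Theorem \ref{lb2}. That argument requires $D=3D'$, that $D'$ divide $n$, and that $n\ge 2D$ with $n\in\Theta(n')$; I would choose $n$ to be a multiple of $D'$ of the right asymptotic order satisfying these constraints. With $(n,D)$ pinned down in this way, Theorem \ref{lb2} already yields the $\Omega(\log\log L)$ lower bound on the ring of this size with agents at distance $D$.

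Next I would instantiate the clique-chain construction at this same $(n,D)$ to recover the $\Omega(D\log(n/D))$ term. Here the needed flexibility comes from two places in the earlier proofs: the freedom to pick the (even) clique size $k$ and chain length $\ell$ in Lemma \ref{lowerboundgraph}, and the path-padding step in the proof of Theorem \ref{lb1} that attaches a path to the treasure node in order to reach an exact target node count. Using the treasure-to-rendezvous reduction of Corollary \ref{cor} (joining two copies of such a graph at their treasure nodes), I would produce an $n$-node graph with two agents at distance $D$ on which any $O(D)$-time rendezvous algorithm needs $\Omega(D\log(n/D))$ advice. Since $n\in\Theta(n')$ and $D\in\Theta(D')$ are preserved throughout, both bounds then hold at the same pair, and the combination step finishes the proof.

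The hard part will not be the combination step, which is immediate, but the parameter bookkeeping needed to realize both graph families at a single $(n,D)$: matching the distance $D$ despite the differing divisibility and parity conventions of the two constructions, and hitting the exact node count $n$ while keeping $n\in\Theta(n')$ and $D\in\Theta(D')$. I expect all of this to go through using the slack already built into the path-padding and into the choice of $k$ and $\ell$, but it is the place where care is required.
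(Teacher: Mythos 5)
Your proposal is correct and follows essentially the same route as the paper, whose entire proof of this theorem is the single line that Corollary \ref{cor} and Theorem \ref{lb2} together imply it (two lower bounds on the same quantity combine additively up to a factor of 2). In fact, you are more careful than the paper: the parameter-matching issue you identify — realizing both the clique-chain construction and the oriented-ring construction at one common pair $(n,D)$ despite their differing divisibility and parity constraints — is left entirely implicit in the paper, and your plan for resolving it via the path-padding slack and the freedom in choosing $k$ and $\ell$ is sound.
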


Theorems \ref{ub} and \ref{lb} imply the following corollary which is the main result of this paper.

\begin{corollary}
The minimum number of bits of advice sufficient to accomplish rendezvous of agents with labels from the set $\{1,\dots ,L\}$ in all $n$-node graphs in time $O(D)$, whenever the initial distance between the agents is $D$, is $\Theta((D\log (n/D)+\log\log L))$.
\end{corollary}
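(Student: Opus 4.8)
The plan is to establish the claimed bound $\Theta(D\log(n/D)+\log\log L)$ as a pair of matching estimates, reading the upper bound and the lower bound directly off the two halves of the paper. Since the target quantity is a sum of two terms and each of the preceding theorems supplies one matching pair, essentially all the work lies in the correct bookkeeping of the asymptotic classes; I would not attempt to re-prove anything.

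First I would handle the upper direction. Theorem~\ref{ub} already produces, for every $n$-node graph with the agents at distance $D$, an explicit advice string of length $O(D\log(n/D)+\log\log L)$ together with Algorithm {\tt Fast Rendezvous}, which meets in exactly $D$ rounds and hence in time $O(D)$. This shows that advice of size $O(D\log(n/D)+\log\log L)$ suffices on every instance, so the minimum over correct fast algorithms is at most this quantity.

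For the lower direction I would invoke Theorem~\ref{lb}. Given target parameters $D'\le n'$, that theorem yields actual parameters $n\in\Theta(n')$ and $D\in\Theta(D')$ and exhibits $n$-node graphs on which any algorithm meeting in time $O(D)$ must use $\Omega(D\log(n/D)+\log\log L)$ bits of advice. This matches the upper bound term for term: the $\Omega(D\log(n/D))$ part descends from the chained-clique construction (Lemma~\ref{lowerboundgraph}, propagated to treasure hunt in Theorem~\ref{lb1} and to rendezvous in Corollary~\ref{cor}), while the $\Omega(\log\log L)$ part comes from the oriented-ring meta-behaviour argument of Theorem~\ref{lb2}. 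Combining the two directions yields minimum advice $\Theta(D\log(n/D)+\log\log L)$, as claimed.

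The one point requiring care is the quantifier mismatch between the two halves. The upper bound of Theorem~\ref{ub} holds for every pair $(n,D)$, whereas Theorem~\ref{lb} only asserts its lower bound for \emph{some} $n\in\Theta(n')$, $D\in\Theta(D')$ rather than for all pairs simultaneously. Since the corollary is stated entirely up to multiplicative constants in $n$ and $D$, this is harmless: for any prescribed growth of $n'$ and $D'$ the lower-bound instances realise the claimed bound within the same asymptotic class on which the upper bound applies, so the two estimates compose into a genuine $\Theta$. The genuinely hard content is therefore not in this packaging step at all, but in the lower bound itself—specifically the volume-of-simplex counting argument behind Lemma~\ref{lowerboundgraph} and the information-theoretic ring argument of Theorem~\ref{lb2}; the corollary merely records that these matching bounds coincide.
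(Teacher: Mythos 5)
Your proposal is correct and matches the paper exactly: the paper derives this corollary by simply combining Theorem~\ref{ub} (the explicit advice string and Algorithm {\tt Fast Rendezvous}) with Theorem~\ref{lb} (itself a combination of Corollary~\ref{cor} and Theorem~\ref{lb2}), with no additional argument. Your remark on the quantifier packaging --- that the lower bound holds only for some $n \in \Theta(n')$, $D \in \Theta(D')$, which suffices for a statement made up to multiplicative constants --- is a point the paper leaves implicit, but it does not change the substance of the proof.
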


\section{Conclusion}

We established that  $\Theta(D\log(n/D)+\log\log L)$ is the minimum amount of information (advice) that agents must have in order to meet in optimal time $\Theta(D)$, where $D$ is the initial distance between them.
It should be noted that the two summands in this optimal size of advice have very different roles. 
On one hand, $\Theta(D\log(n/D))$ bits of advice are necessary and sufficient
to accomplish, in $O(D)$ time, the easier task of treasure hunt in $n$-node networks, where a single agent must find a target (treasure) hidden at an unknown node of the network
at distance $D$ from its initial position. This task is equivalent to a special case of rendezvous where one of the agents is inert. On the other hand, for agents whose labels are drawn from a label space of size $L$,
$\Theta(\log\log L)$ bits of advice are needed to break symmetry quickly enough 
in order to solve rendezvous in time $O(D)$,  
and hence, are necessary to meet in optimal time $\Theta(D)$,  even in constant-size networks.
 It should be stressed that the first summand in $O(D\log(n/D)+\log\log L)$ is usually larger than the second. Indeed, only when $L$ is very large with respect to $n$ and $D$ does the second summand dominate. This means that  ``in most cases'' the easier task of solving treasure hunt in optimal time is as demanding,
 in terms of advice,  as the harder task of solving rendezvous in optimal time.
 
 In this paper, we assumed that the advice given to both agents is identical. How does the result change when each agent can get different advice? It is clear that
 giving only one bit of advice, 0 to one agent and 1 to the other, breaks symmetry between them, e.g., the algorithm can make the agent that received bit 0 stay inert. Thus, if advice can be different,
 one bit of advice reduces rendezvous to treasure hunt. The opposite reduction is straightforward. 
 Hence it follows from our results that $\Theta(D\log(n/D))$ bits of advice are necessary and sufficient
 to accomplish rendezvous in optimal time $\Theta(D)$ in $n$-node networks, if advice can be different. This holds regardless of the label space and is, in fact, 
 also true for anonymous (identical) agents.


\bibliographystyle{plain}


\end{document}